\theoremstyle{definition}
\newtheorem{defi}{Definition}[section]
\newtheorem{ex}[defi]{Example}
\theoremstyle{plain}
\newtheorem{lema}[defi]{Lemma}
\newtheorem{thm}[defi]{Theorem}
\newtheorem{cor}[defi]{Corollary}
\newtheorem{rk}[defi]{Remark}
\definecolor{pink}{rgb}{1,0,1}
 \definecolor{blue}{rgb}{0,0,1}
 \definecolor{garnet}{RGB}{210,15,30}
\newtheorem{teo-def}[defi]{Theorem/Definition}
\newcommand{\CC}{\mathbb{C}}
\newcommand{\KK}{\mathbb{K}}
\newcommand{\RR}{\mathbb{R}}
\newcommand{\dd}{\mathrm{diag}}
\newcommand{\EE}{\mathcal{E}}
\newif\ifprivate
\def\???{\ifprivate {\bf {???}} \marginpar{{\Huge {\bf ?}}}
\else \fi}
 \definecolor{zielony}{rgb}{0.5, 0.9, 0.1}
 \definecolor{czerwony}{rgb}{0.9, 0.2, 0.1}
 \definecolor{niebieski}{rgb}{0.3, 0.1, 0.9}
\title{Embeddability of Kimura 3ST Markov matrices}
\author{Jordi Roca-Lacostena and Jes\'us Fern\'andez-S\'anchez}
\begin{document}

\maketitle

\begin{abstract}
In this note, we characterize the embeddability of generic Kimura 3ST Markov matrices  in terms of their eigenvalues. As a consequence, we are able to compute the  volume of such matrices relative to the volume of all Markov matrices within the model.  %
We also provide examples showing that, in general, mutation rates are not identifiable from substitution probabilities.
These examples also illustrate that symmetries between mutation probabilities do not necessarily arise from symmetries
between the corresponding mutation rates.
\end{abstract}

\emph{Keywords}: Markov matrix; Markov generator; eigenvalues; evolutionary model; embeddability

\section{Introduction}

Genomic data expressed by means of sequence alignments is widely used to infer phylogenetic relationships between species. Substitution models are used to describe the evolutionary process that leads from one DNA sequence to another. These models are usually given in terms of a family of Markov matrices with a prescribed structure. The entries of these matrices represent the conditional probabilities of nucleotide substitution between one sequence and the other, and can be obtained either by counting the relative frequencies of these substitutions or fitting the parameters of the model using maximum likelihood. Usually, the structure imposed by the model is motivated by some biological / biochemical properties observed (e.g. the Kimura 3ST model \cite{kimura1981}) or some computational / mathematical convenient assumptions to deal with the model (e.g the GTR model \cite{tavare} or Lie Markov models \cite{LMM}).
Moreover, evolution is usually modelled by means of Markov chains, together with the additional assumption that all sites in the sequences evolve independently and according to the same probabilities. 

A general approach in modelling evolution corresponds to regarding time as a continuous variable where substitution events always happen at the same rate, which remains constant throughout the whole evolutionary process. This leads to the homogeneous continuous-time substitution models, where only  Markov matrices that are the exponential of a rate matrix are considered.
Clearly this is used as an approximation to biological reality where it is well known that transition rates vary over time \cite{ho2005,ho2007} and also among the different branches of the phylogenetic tree \cite{lockhart1998}.
However, given the bias / variance compensation of the statistical analysis \cite{burnham2002}, modelling phylogenetic evolution as a non-homogeneous process is not statistically feasible in practice (cf. \cite{LMM}).

A different approach appears when one regards the evolutionary process as a whole and only takes into account the conditional probabilities between the original and the final sequences, without caring about rates of mutation. When these probabilities are taken as the parameters of the model, we deal with the so-called \emph{algebraic} models\footnote{Here, ``algebraic'' refers to the fact that the probabilities of pattern observation at the leaves of a phylogenetic tree evolving under these models are given by algebraic expressions (only sums and products) in terms of the parameters of the model.}. 
Algebraic models have been used in a number of theoretical papers, including  \cite{allman2004,strumfels2004,draisma2008,casfer2010}. 
If one attempts to connect both approaches, a natural question is to decide whether a given Markov matrix is the exponential of some rate matrix, whose entries would be some kind of average of the rates involved throughout the evolutionary process. 
%
In this case, we say the matrix is \emph{embeddable} and this question is known in the literature as the \emph{embedding problem} for Markov matrices. 
An easier version of this problem is to decide whether the rate matrices associated to the embeddable matrices of a particular (algebraic) model $\mathcal M$ should keep the same symmetries as the model ($\mathcal M$-\emph{embeddability}, see definition in Section 2.2).
The embedding problem is relevant even if restricted to continuous-time models since it is not true in general that the product of embeddable matrices is necessarily embeddable (indeed, the Baker-Campbell-Hausdorff formula \cite{campbell1897} leads to ask whether some series of matrices is convergent or not, which is not always true \cite{casasblanes}).
These questions are closely related to the problem of the multiplicative closure of continuous-time models, namely whether the product of matrices $e^{Q_1} e^{Q_2}$ where $Q_1$ and $Q_2$ are rate matrices in one particular (continuous-time) model can be obtained as some $e^{Q}$ for some rate matrix $Q$ in \emph{the same model}. 
After \cite{LMM,Sumner2017}, it is known that there are popular models which are not multiplicatively closed, notably including the GTR model and the HKY model. 

The reader is referred to \cite{Davies} for a nice overview of the embedding problem from a mathematical point of view. 
%
In a more biological and applied setting, the paper by Verbyla \emph{et al.} \cite{verbyla} deals with the possible consequences for phylogenetic inference. 
Also, the paper \cite{Sumner2012} and the more recent paper \cite{Sumner2017b} deal with the incidental question of how the lack of (multiplicative) closure in substitution models have consequences for the phylogenetic analysis of data. 
%

In this paper, we deal with the embedding problem from a theoretical perspective. The main goal is to obtain a characterization for the embeddability of generic matrices of the Kimura 3ST model \cite{kimura1981}. From our results, we will be able to compute the whole volume of embeddable Kimura 3ST matrices and compare it with the volume of the whole space of Kimura 3ST Markov matrices. 
%
At the same time, we provide a number of examples showing matrices that are embeddable but for which 
the mutation rates are not identifiable or do not keep the same structure of the model. 
The recent paper \cite{Kaie} deals with the similar question of characterizing embeddable matrices of symmetric group-based phylogenetic models, but focusing on the existence of rate matrices strictly in the model. 

The organization of the paper is as follows. In section 2, we recall some definitions and basic facts concerning the embedding problem and the Kimura 3 parameter model.
Here, we also show that any embeddable matrix is biologically relevant since it can be seen as the transition matrix of a concatenation of \emph{realistic} evolutionary processes (``realistic'' here means a process whose transition matrix is close to the identity matrix, see Theorem \ref{realism}). In section 3, we prove the main theorem which characterizes under the (generic) assumption of having different eigenvalues the Kimura 3ST embeddable matrices in terms of inequalities to be satisfied by the eigenvalues. 
We devote as well some attention to the case of matrices with repeated eigenvalues as they present certain situations that may be interesting from a theoretical and applied point of view.  
Namely, these matrices show that the identifiability of the mutation rates is not a generic property for the Kimura 2ST model or the Jukes-Cantor model, as well as that there are embeddable matrices with rate matrices that do not keep the same symmetries of the model (see Theorem \ref{thm:identif}).
As a consequence of the characterization mentioned above, in section 4 we are able to compute the volume of embeddable matrices and compare it to the volume of all Kimura 3ST Markov matrices. Finally, Section 5 discusses implications and possibilities for future work.


\section{Preliminaries}

\subsection{Embedding problem of Markov matrices}
We denote by $M_k(\KK)$ the space of all square $k$-matrices with entries in a field $\KK$, where $\KK$ is $\RR$ or $\CC$. 
Given a matrix $A\in M_k(\mathbb{K})$,  we say that $B\in M_k(\mathbb{K})$ is a \emph{logarithm} of $A$ if $e^B=A$, where the exponential of a matrix is defined as 
\begin{eqnarray*}
 e^X=\sum_{n\geq 0} \frac{X^n}{n!}.
\end{eqnarray*}
A classical result states that $det(e^X)=e^{tr(X)}$, so the determinant of any matrix of the form $e^X$ is never 0. 
Given a non-negative complex number $x\in \CC\setminus \RR^{-}$, we will denote by $log(x)$ its \emph{principal logarithm}, that is, the only logarithm of $x$ that lies in the strip $\{z \mid -\pi < Im(z) < \pi \}$. 
Although the exponential map of matrices is not injective, it is known that if $A$ is a matrix with no negative eigenvalues, there is a unique logarithm $X$ of $A$ all of whose eigenvalues are given by the principal logarithm of the eigenvalues of $A$ (Theorem 1.31 of \cite{higham}). We will refer to this as the \emph{principal logarithm of} $A$ and we will denote it by $Log(A)$. 
%
%
In the particular case where the matrix $A$ is diagonalizable, $A=S\, D\,S^{-1}$ then $Log(A)=S Log(D) S^{-1}$, where $Log(D)$ is the diagonal matrix with diagonal entries equal to the principal logarithm of the eigenvalues of $A$.

\begin{defi}
 A matrix $M\in M_k(\mathbb{R})$ is said to be a \emph{Markov matrix} if all the entries are non-negative and the rows sum to one.
 A matrix $Q\in M_k(\mathbb{R})$ is said to be a \emph{rate matrix} if all the non-diagonal entries are non-negative and the rows sum to zero.
\end{defi}

If $Q$ is a rate matrix, it is well-known that $e^{tQ}=\sum_{n\geq 0} \frac{t^n Q^n}{n!}$ is a Markov matrix for all $t\geq 0$. That is why rate matrices are also referred as \emph{Markov generators} \cite{Davies}. However, not every Markov matrix can be obtained in this way. A Markov matrix $M$ is said to be \emph{embeddable} if $M=e^{Q}$ for some rate matrix $Q$. The \emph{embedding problem} attempts to decide which (Markov) matrices are embeddable, that is, which matrices can be written as $M=e^Q$, where $Q$ is a rate matrix. 
We would like to point out that every embeddable matrix can be obtained as the substituion matrix of a long-running biologically realistic Markov process. Namely,

%

\begin{thm}\label{realism}
 Every embeddable matrix is the product of embeddable matrices close to the identity matrix. 
\end{thm}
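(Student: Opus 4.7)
The plan is to exploit the one-parameter semigroup structure of the exponential of a rate matrix. Since $M$ is assumed embeddable, by definition we may fix a rate matrix $Q$ with $M = e^Q$. The key observation is that the set of rate matrices is closed under multiplication by non-negative scalars: for any positive integer $n$, the matrix $Q/n$ still has non-negative off-diagonal entries and rows summing to zero, so $Q/n$ is itself a rate matrix.

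Having noticed this, the proof is essentially a single computation. First I would define $M_n := e^{Q/n}$ and observe that $M_n$ is embeddable by construction. Then, using that $Q/n$ commutes with itself, I would invoke the standard identity $e^{A+B} = e^A e^B$ (valid whenever $AB = BA$) $n-1$ times to get
\begin{equation*}
M \;=\; e^{Q} \;=\; e^{Q/n + Q/n + \cdots + Q/n} \;=\; \underbrace{M_n \, M_n \, \cdots \, M_n}_{n \text{ factors}}.
\end{equation*}
This exhibits $M$ as a product of $n$ embeddable matrices, each equal to $M_n$.

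It remains to argue that $M_n$ can be made arbitrarily close to the identity by choosing $n$ large. This follows immediately from the continuity of the matrix exponential: the series defining $e^{Q/n}$ gives
\begin{equation*}
\bigl\| e^{Q/n} - I \bigr\| \;\leq\; \sum_{k \geq 1} \frac{\|Q\|^k}{n^k \, k!} \;\leq\; \frac{\|Q\|}{n} \, e^{\|Q\|/n},
\end{equation*}
which tends to $0$ as $n \to \infty$ in any submultiplicative matrix norm. Hence, given any neighborhood $\mathcal{U}$ of the identity matrix, we may choose $n$ large enough that $M_n \in \mathcal{U}$, completing the argument.

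The proof is short and I do not foresee a real obstacle; the only thing to be careful about is the interpretation of ``close to the identity matrix'' in the statement of the theorem, which should be made precise (either as lying in an arbitrarily small prescribed neighborhood of $I$, or as yielding a sequence of factors converging to $I$). Both interpretations are handled by the bound above.
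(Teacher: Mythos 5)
Your proposal is correct and follows essentially the same route as the paper: write $M = \bigl(e^{Q/n}\bigr)^n$ using that $Q/n$ is still a rate matrix, and let $n \to \infty$ so the factors tend to the identity. The only difference is cosmetic — you make the convergence quantitative with an explicit norm bound, where the paper simply invokes continuity of the matrix exponential.
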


\begin{proof}
Assume that $M$ is an embeddable Markov matrix: $M=e^Q$. 
 Clearly, $Q_n:=\frac{1}{n}Q$ is still a rate matrix for any $n\geq 1$, so $M= (e^{Q_n})^n$ appears as the $n$-th power of a Markov matrix. Moreover, since
 \begin{eqnarray*}
  \lim_{n\rightarrow \infty} e^{Q_n} = e^{\lim_{n\rightarrow \infty} Q_n}= e^{(0)}=Id,
 \end{eqnarray*}
we can take $n$ big enough so that $e^{Q_n}$ is as close to $Id$ as wanted. 
 \end{proof}

\vspace{2mm}
\subsection{Kimura models}
In this work we deal with the substitution model introduced by Kimura in \cite{kimura1981}. The Kimura 3ST model assigns three parameters to different type of substitutions: one parameter for transitions, i.e. substitutions between purines ($\tt A \leftrightarrow \tt G$) or pyrimidines ($\tt C \leftrightarrow \tt T$), and two parameters for transversions, i.e. substitutions that change the type of nucleotide: from purine to pyrimidine or vice versa. 
Ordering the set of nucleotides as ${\tt A},{\tt G},{\tt C},{\tt T}$, the Markov matrices within the model are described by the following structure: 
\begin{defi}
A matrix $M\in M_4(\CC)$ is \textit{Kimura 3ST} (is K3 or has K3 form, for short) if it has the following structure: 
\begin{eqnarray}\label{K3form}
 M=\begin{pmatrix}
  a & b & c & d \\
  b & a & d & c \\
  c & d & a & b \\
  d & c & b & a 
  \end{pmatrix}.
  \end{eqnarray}
For ease of reading we will use the notation $M=K(a,b,c,d)$ to denote a matrix with the structure in~(\ref{K3form}). 
\end{defi}

When $M$ is a Markov matrix, the structure above describes the symmetry between the substitution probabilities of the Kimura 3ST model. Keeping the order of the nucleotides, if $i,j={\tt A, G, C, T}$, the $(i,j)$-entry corresponds to the probability of nucleotide $i$ being replaced by nucleotide $j$. 
The submodels of Kimura 3ST model, namely Kimura 2ST \cite{Kimura1980} and Jukes-Cantor \cite{JC69}, appear when more symmetries are considered: 
if $c=d$, we say that the matrix $M$ is \textit{Kimura 2ST} (K2, for short); if $b=c=d$, we say that $M$ is \textit{Jukes-Cantor} (JC, for short).

If one restricts the embedding problem to one particular model $\mathcal{M}$ given by some equalities between the entries of the Markov matrices (such as in Kimura 3ST, Kimura 2ST, Jukes-Cantor), it is natural to ask whether these matrices have Markov generators fulfilling the same equalities.
In this case, we say that these matrices are $\cal{M}$-\emph{embeddable}. %
Example \ref{ex35} of the next section shows that this is not true in general.

\vspace{3mm}
The following lemma is fundamental for our study. It essentially claims that all K3 matrices are diagonalized through the following Hadamard matrix: \begin{eqnarray*}
S:=\left ( \begin{array}{rrrr}
1 & 1 & 1& 1 \\ 1 & 1 & -1& -1\\1 & -1 & 1 & -1 \\1 & -1 & -1 & 1
 \end{array} \right ).
\end{eqnarray*}

Note that $S^2=4 \cdot Id$, thus $S^{-1}=\frac{1}{4} S$.

\begin{lema}\label{Lema:K3Diag}\
A matrix is K3 if and only if it can be diagonalized through $S$. %
In this case, 
$K(a,b,c,d)=S\, D\, S^{-1}$, where 
$D=\operatorname{diag}(a+b+c+d , \ a+b-c-d, a-b+c-d, \ a-b-c+d)$. 
 In particular, a K3 matrix is real if and only if its eigenvalues are all real numbers.
\end{lema}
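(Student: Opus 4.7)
The plan is to prove both directions of the equivalence by a single direct computation that also delivers the formula for $D$, and then deduce the statement about reality from the invertibility of the Hadamard transform.

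First I would verify the ``only if'' direction by checking that the four columns of $S$ are eigenvectors of an arbitrary K3 matrix $K(a,b,c,d)$. Writing $v_1=(1,1,1,1)^T$, $v_2=(1,1,-1,-1)^T$, $v_3=(1,-1,1,-1)^T$, $v_4=(1,-1,-1,1)^T$, a quick multiplication using the block symmetry of $K(a,b,c,d)$ shows
\begin{eqnarray*}
Kv_1=(a+b+c+d)v_1,\quad Kv_2=(a+b-c-d)v_2,\\
Kv_3=(a-b+c-d)v_3,\quad Kv_4=(a-b-c+d)v_4.
\end{eqnarray*}
Since these eigenvectors are linearly independent and form the columns of $S$, this yields $S^{-1}KS=D$ with $D$ exactly as in the statement.

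For the ``if'' direction, I would argue by dimension. The set of K3 matrices is a $4$-dimensional complex vector subspace of $M_4(\CC)$, parametrised linearly by $(a,b,c,d)$. On the other hand, the map
\begin{eqnarray*}
\Phi:(\lambda_1,\lambda_2,\lambda_3,\lambda_4)\longmapsto S\,\operatorname{diag}(\lambda_1,\lambda_2,\lambda_3,\lambda_4)\,S^{-1}
\end{eqnarray*}
is linear and injective (a matrix is zero iff its diagonalisation is), so its image is also a $4$-dimensional subspace of $M_4(\CC)$. The first part of the proof shows that $\Phi$ lands inside the space of K3 matrices, so by equality of dimensions its image is exactly the K3 matrices. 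Consequently every matrix diagonalisable through $S$ has K3 form, finishing the equivalence.

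For the last claim, the correspondence between $(a,b,c,d)$ and $(\lambda_1,\lambda_2,\lambda_3,\lambda_4)$ is given by the real invertible linear map encoded in $S$ (up to the factor $\frac14$). Hence $(a,b,c,d)\in\RR^4$ if and only if $(\lambda_1,\lambda_2,\lambda_3,\lambda_4)\in\RR^4$, which is exactly the statement that a K3 matrix is real iff all its eigenvalues are real. There is no real obstacle in this proof; the only thing to be slightly careful about is the dimension-counting step, which depends on knowing that $\Phi$ is injective and that K3 matrices genuinely form a $4$-dimensional subspace, both of which are immediate from the definitions.
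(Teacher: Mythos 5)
Your proof is correct, and it matches the paper's approach: the paper simply states that the lemma ``follows by direct computation,'' and your verification that the columns of $S$ are eigenvectors of $K(a,b,c,d)$ with the stated eigenvalues is exactly that computation, with the dimension count and the invertibility of the (real) Hadamard transform being clean ways to dispatch the converse and the reality claim. The only cosmetic slip is the sentence claiming the first part shows $\Phi$ lands \emph{inside} the K3 matrices --- what it literally shows is the reverse inclusion (every K3 matrix lies in the image of $\Phi$) --- but since both are $4$-dimensional subspaces, either inclusion yields the desired equality, so the argument stands.
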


\begin{proof}
The proof is straightforward and follows by direct computation. 
%
\end{proof}

Since the rows of K3 Markov matrices sum to $a+b+c+d=1$, the first eigenvalue must be equal to one. %
Hence, we derive that every K3 Markov matrix is determined by the set of the other eigenvalues: 
\begin{eqnarray}\label{coord}
x:=a+b-c-d, \qquad
y:=a-b+c-d, \qquad
z:=a-b-c+d. 
\end{eqnarray}
Moreover, it is immediate to check that a matrix $M$ as in (\ref{K3form}) is K2 (resp. JC) if and only if $y=z$ (resp. $x=y=z$).

It also follows that 
\begin{thm} \label{closureK3embed} 
The product of $K3$-embeddable matrices is $K3$-embeddable. 
\end{thm}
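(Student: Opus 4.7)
The plan is to leverage Lemma \ref{Lema:K3Diag} in a very strong way: every K3 matrix is diagonalized by the \emph{same fixed} Hadamard matrix $S$. In particular, any two K3 rate matrices share a common eigenbasis and therefore commute.

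So suppose $M_1 = e^{Q_1}$ and $M_2 = e^{Q_2}$ are K3-embeddable; by definition of $\mathcal{M}$-embeddability, the generators $Q_1, Q_2$ are themselves K3 rate matrices. Writing $Q_i = S D_i S^{-1}$ with $D_i$ diagonal, I get $Q_1 Q_2 = S D_1 D_2 S^{-1} = S D_2 D_1 S^{-1} = Q_2 Q_1$, so by the standard identity for the exponential of commuting matrices,
\begin{eqnarray*}
M_1 M_2 = e^{Q_1} e^{Q_2} = e^{Q_1 + Q_2}.
\end{eqnarray*}

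It then remains to check that $Q := Q_1 + Q_2$ is again a K3 rate matrix. The K3 property is preserved under sums: since $Q_1$ and $Q_2$ are both diagonalized by $S$, so is their sum (equivalently, the K3 form is cut out by linear equalities among the entries, and hence defines a vector subspace of $M_4(\RR)$). The rate matrix conditions are also clearly linear and closed under addition: a sum of nonnegative off-diagonal entries is nonnegative, and a sum of rows summing to zero again sums to zero. Thus $Q$ is a K3 Markov generator witnessing that $M_1 M_2$ is K3-embeddable.

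There is essentially no hard step here; the entire content lies in the observation that Lemma \ref{Lema:K3Diag} forces simultaneous diagonalizability, which turns an a priori noncommutative Baker--Campbell--Hausdorff issue into the trivial commutative case. The only point that requires a moment of care is recalling from the preceding paragraph that K3-embeddability demands the generator itself be K3, which is precisely what allows the commutation argument to go through.
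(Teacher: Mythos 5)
Your proof is correct and takes essentially the same route as the paper: both arguments use Lemma \ref{Lema:K3Diag} to conclude that K3 rate matrices are simultaneously diagonalized by the fixed matrix $S$, hence commute, so that $e^{Q_1}e^{Q_2}=e^{Q_1+Q_2}$ with $Q_1+Q_2$ again a K3 rate matrix. The only cosmetic difference is that the paper phrases the commuting-exponentials step via the Baker--Campbell--Hausdorff formula (whose higher terms vanish since $[Q_1,Q_2]=0$), whereas you invoke the standard identity for exponentials of commuting matrices directly.
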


\begin{proof}
By virtue of Lemma \ref{Lema:K3Diag}, we have 
\[K(a,b,c,d)\cdot K(a',b',c',d')=( S\, D \, S^{-1}) \cdot ( S \, D'\, S^{-1})=S\, D D'\, S^{-1},\]
which is a K3 matrix.
Since $D$ and $D'$ are diagonal matrices, we have that $D\, D' = D'\, D$ and from this, the product of K3 matrices is commutative. 
Therefore, if $Q_1$, $Q_2$ are K3 rate matrices,  the Baker-Campbell-Hausdorff formula \cite{campbell1897} gives 
\[e^{Q_1} e^{Q_2}=exp \Big (Q_1+Q_2+\frac{1}{2}[Q_1,Q_2]+\dots \Big )=exp \big (Q_1+Q_2 \big ),\]
where $[A,B]=AB-BA$ is the Lie bracket.
Since $Q_1+Q_2$ is also a K3 rate matrix, the claim follows. 
\end{proof}

\section{Embeddability of Kimura Markov matrices}
 
 The first result of this section describes how to compute the principal logarithm of a K3 Markov matrix and characterizes when it is a Markov generator. 
 
First, we need a lemma which follows from the definition of the exponential matrix. 
 \begin{lema}\label{eigs_MQ}
Let $Q$ be a logarithm of $M$. If $v$ is an eigenvector of $Q$ with eigenvalue $\mu$, then $v$ is an eigenvector of $M$ with eigenvalue $e^{\mu}$. 
 \end{lema}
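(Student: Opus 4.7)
The plan is to unwind the series definition of the matrix exponential and exploit the fact that $v$ is a simultaneous eigenvector of every power of $Q$.

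First, I would show by induction on $n \geq 0$ that $Q^n v = \mu^n v$: the base case is trivial, and the inductive step is $Q^{n+1} v = Q(Q^n v) = Q(\mu^n v) = \mu^n (Qv) = \mu^{n+1} v$. Then, applying $M = e^Q = \sum_{n \geq 0} \tfrac{Q^n}{n!}$ to $v$ and swapping the (finite-dimensional, absolutely convergent) series with the vector, I get
\begin{eqnarray*}
Mv \;=\; \sum_{n \geq 0} \frac{Q^n v}{n!} \;=\; \sum_{n \geq 0} \frac{\mu^n}{n!}\, v \;=\; e^{\mu}\, v,
\end{eqnarray*}
which is exactly the claim. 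There is no real obstacle here: the only point worth mentioning is the justification for moving $v$ out of the infinite sum, which is immediate because the partial sums $\sum_{n=0}^{N} \tfrac{Q^n}{n!}$ converge to $e^Q$ entrywise and hence the vectors $\sum_{n=0}^{N} \tfrac{Q^n v}{n!}$ converge to $e^Q v$ componentwise.
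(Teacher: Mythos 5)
Your proof is correct and is precisely the argument the paper has in mind: the paper does not write out a proof, remarking only that the lemma ``follows from the definition of the exponential matrix,'' and your induction $Q^n v = \mu^n v$ followed by termwise application of the series $e^Q = \sum_{n\geq 0} Q^n/n!$ to $v$ is exactly that argument, with the convergence step properly justified.
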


\begin{rk}\rm
Note that the converse is not true in general. For instance, any vector of $\mathbb{C}^2$ is an eigenvector of $M:= \left ( \begin{array}{cc} -1 & 0 \\ 0 & -1 \end{array} \right )$ with eigenvalue $-1$. However, the matrix  $Q=\left ( \begin{array}{cc} 0 & -1 \\ 1 & 0 \end{array} \right )$ is a logarithm of $M$, with eigenvalues $i$ and $-i$ and  corresponding eigenspaces $[(i,1)]$ and $[(-i,1)]$. In particular, any vector not in these subspaces cannot be an eigenvector of $Q$. 
\end{rk}

The following result solves the $K3$-embeddability.

\begin{thm}\label{thm:LogK3Embed}
Let $M$ be a K3 Markov matrix with eigenvalues $1,x,y,z$.
Then,
\begin{enumerate}[i)]
 \item $M$ has a real logarithm $Q$ with K3 form if and only if $x,y,z>0$. %
 In this case, $Q$ is necessarily the principal logarithm $Log(M) = S \; \dd(0,log(x),log(y),log(z)) \; S^{-1}$.  
 
 \item $M$ is $K3$-embeddable (i.e. $Log(M)$ is a Markov generator) if and only if   
\begin{eqnarray}
\label{ineq}
x\geq y z, \qquad   
y\geq x z, \qquad 
z\geq x y.
\end{eqnarray}
\end{enumerate}
 \end{thm}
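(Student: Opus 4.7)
The plan is to reduce both parts to straightforward computations on the eigenvalue side, using the Hadamard diagonalization of Lemma \ref{Lema:K3Diag}.

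For part (i), I would start by noting that if $Q$ is any K3 logarithm of $M$, then by Lemma \ref{Lema:K3Diag} we can write $Q = S\,D_Q\,S^{-1}$ for a diagonal matrix $D_Q = \operatorname{diag}(\mu_0,\mu_1,\mu_2,\mu_3)$. Exponentiating gives $M = S\,e^{D_Q}\,S^{-1}$, and comparing with the diagonalization of $M$ from Lemma \ref{Lema:K3Diag} forces $e^{\mu_0}=1$, $e^{\mu_1}=x$, $e^{\mu_2}=y$, $e^{\mu_3}=z$. If $Q$ is real, then $D_Q = S^{-1}Q S$ is also real (because $S$ is real), so the $\mu_i$ are real; this forces $x,y,z > 0$ and determines $\mu_i$ uniquely as principal logarithms, so $Q = \operatorname{Log}(M)$. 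Conversely, assuming $x,y,z > 0$, I would simply verify that the matrix $Q := S\,\operatorname{diag}(0,\log x,\log y,\log z)\,S^{-1}$ is real, K3 (by Lemma \ref{Lema:K3Diag}), and satisfies $e^Q = M$.

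For part (ii), K3-embeddability requires $\operatorname{Log}(M)$ to exist as a real K3 rate matrix, so by part (i) we already need $x,y,z > 0$ and $\operatorname{Log}(M)$ must be of the form $K(a',b',c',d')$ for some real numbers $a',b',c',d'$. Writing the system of relations (\ref{coord}) for the eigenvalues $0,\log x, \log y, \log z$ of $\operatorname{Log}(M)$ and inverting the $4\times 4$ Hadamard system yields
\[
a' = \tfrac{1}{4}\log(xyz), \quad b' = \tfrac{1}{4}\log\tfrac{x}{yz}, \quad c' = \tfrac{1}{4}\log\tfrac{y}{xz}, \quad d' = \tfrac{1}{4}\log\tfrac{z}{xy}.
\]
Since the row sums of a K3 matrix equal its first eigenvalue, $\operatorname{Log}(M)$ automatically has zero row sums, so being a rate matrix reduces to the condition that the off-diagonal entries $b',c',d'$ are non-negative. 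Translating back, this is exactly $x \geq yz$, $y \geq xz$, $z \geq xy$.

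There is no real obstacle here: everything is forced by the fact that $S$ diagonalizes every K3 matrix (Lemma \ref{Lema:K3Diag}) and by the compatibility of diagonalization with the matrix exponential. The only mild point to watch is that the inequalities in (ii) should be applied together with the positivity from (i); I would make this explicit by noting that the existence of the real K3 logarithm is the necessary first step before the sign conditions on $b',c',d'$ can even be formulated.
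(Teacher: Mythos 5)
Your proof is correct and takes essentially the same route as the paper's: diagonalization through $S$ (Lemma \ref{Lema:K3Diag}), identification of the eigenvalues of any real K3 logarithm as real logarithms of $1,x,y,z$ (where the paper invokes Lemma \ref{eigs_MQ}, you compare $D = e^{D_Q}$ directly, which is equivalent), and reduction of the rate-matrix condition to non-negativity of the off-diagonal entries of $\operatorname{Log}(M)$, which translate into the inequalities (\ref{ineq}). Your closing observation that the inequalities in (ii) must be read together with the positivity of $x,y,z$ from (i) is a point the paper's proof leaves implicit, and making it explicit is a small improvement rather than a deviation.
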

 
\begin{proof}
(i) First of all, if $Q$ is a real logarithm of $M$ with K3 form, then the eigenvalues of $Q$ are real (Lemma \ref{Lema:K3Diag}) and, by Lemma \ref{eigs_MQ}, the eigenvalues of $M$ have to be positive. 
Conversely, if $x,y,z>0$ then we can take principal logarithms of $x,y,z$ and define $Q$ as the principal logarithm of $M$, i.e. $Q:=Log(M)=S \; \dd(0,log(x),log(y),log(z)) \; S^{-1}$. Then, $Q$ is a real matrix, $e^Q=M$ and it has K3 form because of Lemma \ref{Lema:K3Diag}. 

We proceed to show that $Log(M)$ is the only possible real logarithm with K3 form. By Lemma \ref{Lema:K3Diag}, if $Q'$ is any other real logarithm with K3 form, then it can be written as $Q'=S \; \dd(r,s,u,v) \; S^{-1}$, for some $r,s,u,v\in \mathbb{R}$. Because of Lemma  \ref{eigs_MQ}, these values are real logarithms of $1,x,y,z$. Necessarily, $Q'$ must be the principal logarithm of $M$. 

\vspace{2mm}
\noindent
(ii) 
%
%
Using Lemma \ref{Lema:K3Diag} we have $Log(M) = K(\alpha,\beta,\gamma,\delta)$, where $\alpha =  \frac{1}{4} \big (log(x)+log(y)+log(z) \big )$, 
$\beta =\frac{1}{4} \big (log(x)-log(y)-log(z)\big)$, 
$\gamma = \frac{1}{4} \big (log(y)-log(x)-log(z)\big)$ and 
$\delta = \frac{1}{4} \big (log(z)-log(x)-log(y)\big)$. 
It is immediate that $\alpha+\beta+\gamma+\delta=0$. Therefore, we only need to check that the non-diagonal entries $\beta$, $\gamma$ and $\delta$ of $Log(M)$ are non-negative if and only if the inequalities (\ref{ineq}) are satisfied. This is straightforward: for instance, 
\begin{eqnarray*}
\beta \geq 0  \Leftrightarrow  \frac{log(x)-log(y)-log(z) }{4}\geq0  \Leftrightarrow  log\left(\frac{x}{yz}\right) \geq 0  \Leftrightarrow  x\geq yz.
\end{eqnarray*}
The other inequalities are proved similarly. 
%
%
%
\end{proof}

\begin{lema} 
\label{thm:K3RealLog}
If $M$ is a K3 matrix with no repeated eigenvalues and $Q$ is a real logarithm of $M$, then $Q$ is the principal logarithm of $M$. In particular, $Q$ has $K3$ form. 
\end{lema}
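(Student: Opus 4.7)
The plan is to exploit the distinctness of the eigenvalues of $M$ to force $Q$ to be simultaneously diagonalizable with $M$ in the basis given by the columns of $S$, and then to pin down the diagonal entries using reality.

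First, I would argue that $Q$ is diagonalizable. Since $M = e^Q$ has four distinct eigenvalues, it is diagonalizable; but any non-trivial Jordan block of $Q$ with eigenvalue $\mu$ would produce a Jordan block of the same size in $e^Q$ with eigenvalue $e^\mu$, contradicting the diagonalizability of $M$. Hence $Q$ has four linearly independent eigenvectors.

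Second, I would combine Lemma \ref{eigs_MQ} with Lemma \ref{Lema:K3Diag}. Each eigenvector of $Q$ is an eigenvector of $M$, and since the eigenvalues $1, x, y, z$ of $M$ are pairwise distinct, each eigenspace of $M$ is one-dimensional and (by Lemma \ref{Lema:K3Diag}) spanned by a column of $S$. Thus the four eigenvectors of $Q$ must coincide (up to scaling) with the columns of $S$, so $Q = S\,D'\,S^{-1}$ for some diagonal matrix $D' = \dd(r,s,u,v)$. Applying Lemma \ref{Lema:K3Diag} in the other direction, $Q$ has K3 form.

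Third, I would identify $D'$. From $M = e^Q = S\,e^{D'}\,S^{-1}$ and the fact that $M$ is also diagonalized by $S$ with diagonal $\dd(1,x,y,z)$, we get $e^r = 1$, $e^s = x$, $e^u = y$, $e^v = z$. Because $Q$ is real and has K3 form, Lemma \ref{Lema:K3Diag} forces its eigenvalues $r,s,u,v$ to be real; in particular $r = 0$ (the only real logarithm of $1$), and $x,y,z$ are necessarily positive reals whose only real logarithms are the principal ones, so $s = \log(x)$, $u = \log(y)$, $v = \log(z)$. This is precisely $Q = Log(M)$.

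The main obstacle is the first step: concluding that $Q$ is diagonalizable purely from the diagonalizability and eigenvalue-distinctness of $e^Q$. Once that is in hand, the rest is a direct reading-off from Lemmas \ref{Lema:K3Diag} and \ref{eigs_MQ} together with the elementary fact that a positive real number has a unique real logarithm.
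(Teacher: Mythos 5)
Your proof is correct and follows essentially the same route as the paper: Lemma \ref{eigs_MQ} together with the one-dimensionality of the eigenspaces of $M$ forces $Q$ to diagonalize through $S$, hence to have K3 form by Lemma \ref{Lema:K3Diag}, after which uniqueness of the real K3 logarithm finishes the argument. The only cosmetic differences are that the paper deduces diagonalizability of $Q$ from the fact that its eigenvalues must be distinct, whereas you exclude nontrivial Jordan blocks directly, and your final identification of the diagonal entries re-proves the uniqueness statement that the paper simply cites from Theorem \ref{thm:LogK3Embed}(i).
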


 \begin{proof} 
Because of Lemma \ref{eigs_MQ}, if the matrix $M$ has no repeated eigenvalues, then so does the matrix $Q$. It follows that $Q$ diagonalizes, and that $M$ and $Q$ have the same eigenvectors. In particular they both diagonalize through the matrix $S$ and hence $Q$ must be K3 (Lemma \ref{Lema:K3Diag}). Now, it is enogh to apply Theorem \ref{thm:LogK3Embed}.
 %
  %
 \end{proof}

\begin{cor}\label{cor:K3noRepeated}
Let $M$ be a K3 Markov matrix with no repeated eigenvalues. Then, the following are equivalent: 
\begin{enumerate}[(i)]
 \item $M$ is  embeddable;
 \item $M$ is  $K3$-embeddable;
 \item the eigenvalues $x,y,z$ of $M$ are strictly positive, and satisfy 
 \begin{eqnarray*}
  x\geq y z, \qquad 
  y\geq x z, \qquad 
  z\geq x y.
 \end{eqnarray*}
\end{enumerate}
\end{cor}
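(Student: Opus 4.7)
The plan is to exploit the no-repeated-eigenvalue hypothesis to force any real logarithm of $M$ to live in the K3 family, at which point the previous theorem supplies everything.

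First I would note the trivial implication (ii) $\Rightarrow$ (i): by definition, K3-embeddability means there is a rate matrix of K3 form whose exponential is $M$, and such a matrix is in particular a rate matrix, so $M$ is embeddable. The equivalence (ii) $\Leftrightarrow$ (iii) is exactly the content of Theorem \ref{thm:LogK3Embed}(ii) (together with the positivity requirement from part (i) of that theorem, which is necessary for the principal logarithm to be real in the first place, hence for a real K3 logarithm to exist).

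The only remaining step, which is the heart of the argument, is (i) $\Rightarrow$ (ii). Suppose $M = e^Q$ for some rate matrix $Q$. Then $Q$ is in particular a real logarithm of $M$. Since $M$ has no repeated eigenvalues, Lemma \ref{thm:K3RealLog} applies and tells us that $Q$ is the principal logarithm $\mathrm{Log}(M)$ and has K3 form. In particular, $M$ has a K3 Markov generator, i.e.\ it is K3-embeddable. Combining the three implications yields the equivalence.

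The main potential obstacle would be the step (i) $\Rightarrow$ (ii), but the no-repeated-eigenvalue hypothesis has already been packaged into Lemma \ref{thm:K3RealLog}, so after invoking it the proof is immediate. The only thing to be mindful of is that one must cite Lemma \ref{thm:K3RealLog} rather than trying to reprove uniqueness of the real logarithm; without the distinct-eigenvalue hypothesis, an embeddable K3 matrix may admit non-K3 real logarithms, and this is precisely the obstacle that prevents the corollary from extending to the degenerate case (as the authors anticipate in their later discussion of repeated eigenvalues).
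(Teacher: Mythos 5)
Your proof is correct and follows essentially the same route as the paper, whose own proof is simply the observation that the corollary ``follows directly from Theorem \ref{thm:LogK3Embed} and Lemma \ref{thm:K3RealLog}.'' Your write-up just makes explicit the three implications (including the key step (i) $\Rightarrow$ (ii) via Lemma \ref{thm:K3RealLog}) that the paper leaves to the reader.
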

 
 \begin{proof}
It follows directly from theorem \ref{thm:LogK3Embed} and Lemma \ref{thm:K3RealLog}.  
 \end{proof}

\subsection*{The case of repeated eigenvalues}
After the previous result, it is natural to ask what can be said in the case of repeated eigenvalues. In this case, there are some theoretically interesting examples 
showing that: 
 \begin{enumerate}
  \item There are embeddable K3 matrices with repeated \emph{negative} eigenvalues. Remarkably, these matrices do not admit K3 Markov generators and their rates are not identifiable. See forthcoming Example \ref{ex35}.
 \item Restricted to the case of repeated positive eigenvalues, there are K3 Markov matrices for which rates are not identifiable. 
 \end{enumerate}
 
 Although the matrices presented in the forthcoming examples are close to saturation and have a big mutation rate, they have still biological interest by virtue of Theorem \ref{realism}. 
%
%

\paragraph{Repeated negative eigenvalues}
 It is well known that a matrix with negative eigenvalues has a real logarithm 
 if and only if the negative eigenvalues have even multiplicity \cite{Culver}. 
 Consider a K3 Markov matrix (actually, it is a $K2$ matrix) with eigenvalues $1$, $e^{-\lambda}$ and $-e^{-\mu}$ with multiplicity $2$, where $\lambda,\mu \geq 0$:
 \begin{eqnarray} \label{matrix_lm}
M=\frac{1}{4} \left ( \begin{array}{cccc}
 1+e^{-\lambda} - 2 e^{-\mu}  &  1+e^{-\lambda} + 2 e^{-\mu}  &  1-e^{-\lambda} & 1-e^{-\lambda}\\
 1+e^{-\lambda} + 2 e^{-\mu}  & 1+e^{-\lambda} - 2 e^{-\mu}  &  1-e^{-\lambda} & 1-e^{-\lambda}\\
  1-e^{-\lambda} & 1-e^{-\lambda} & 1+e^{-\lambda} - 2 e^{-\mu}  & 1+e^{-\lambda} + 2 e^{-\mu}\\
 1-e^{-\lambda} & 1-e^{-\lambda} & 1+e^{-\lambda} + 2 e^{-\mu}  & 1+e^{-\lambda} - 2 e^{-\mu}   \end{array} \right ) 
\end{eqnarray} 
By virtue of Theorem \ref{thm:LogK3Embed}, $M$ has no real logarithm with K3 or K2 form. 

According to Theorem 1.28 in \cite{higham}, distinct logarithms for the matrix $M$ are obtained  as $\overline{S} D \overline{S}^{-1}$, where $D$ is a diagonal matrix with distinct determinations of the logarithm of the eigenvalues of $M$ and the columns of $\overline{S}$ form a basis of eigenvectors of $M$ (see also Chap. VIII\S 8 in \cite{Gantmacher}). %
Hence, if we choose a pair of conjugated eigenvectors $v,\overline{v}$ of the eigenvalue $-e^{-\mu}$ as the third and fourth columns of the matrix $\overline{S}$, then all the matrices of the form
\begin{eqnarray*}
 Q_{k}= \overline{S} \; \dd\Big(0,-\lambda,-\mu +(2k+1)\pi i,-\mu -(2k+1)\pi i\Big) \; \overline{S}^{-1}, \qquad k\in \mathbb{Z}
\end{eqnarray*}
are real logarithms of $M$. 
For example, we can take 
\begin{small}
\[\overline{S}=\begin{pmatrix}
1 & 1 & 1+i & 1-i \\
1 & 1 & -1-i & -1+i  \\
1 & -1 & 1-i & 1+i  \\
1 & -1 & -1+i & -1-i \\
\end{pmatrix}\]
\end{small}
to obtain
  \begin{small}
 \[Q_k= \frac{1}{4} \begin{pmatrix}
   -\lambda-2\mu  & -\lambda+2\mu  & \lambda -2\pi (2k+1)& \lambda+2\pi (2k+1)\\
   -\lambda+2\mu   & -\lambda-2\mu  & \lambda + 2\pi (2k+1) & \lambda -2\pi (2k+1) \\
    \lambda +2\pi (2k+1) & \lambda -2\pi (2k+1) &  -\lambda-2\mu   & -\lambda+2\mu   \\
      \lambda-2\pi (2k+1)  & \lambda+2\pi (2k+1) & -\lambda+2\mu   & -\lambda-2\mu  \end{pmatrix}.\]
\end{small}
\noindent 
It is straightforward to check that this is a Markov generator if and only if $2\pi |2k+1| \leq \lambda$ and $\lambda \leq 2 \mu$. In particular, we see that $M$ is embeddable if 
\begin{eqnarray}\label{eq:neg_unid}
2 \pi \leq \lambda \leq 2\mu. 
\end{eqnarray}
Moreover, in this case, it is enough to take $k=0$ and $k=-1$ to obtain a pair of Markov generators for $M$. 

We illustrate this construction with a numerical example. 
\begin{ex}\label{ex35}
Let us take $\lambda =7$ and $\mu =4$, so that $2\pi \leq \lambda \leq 2\mu$. Then the matrix $M$ is (rounding off to 7 decimals) 
\begin{eqnarray}\label{K2neg}
\left ( \begin{array}{cccc}
0.2410701 & 0.2593858 & 0.2497720 & 0.2497720 \\
0.2593858 & 0.2410701 & 0.2497720 & 0.2497720 \\
0.2497720 & 0.2497720 & 0.2410701 & 0.2593858 \\
0.2497720 & 0.2497720 & 0.2593858 & 0.2410701
\end{array} \right ).
\end{eqnarray}
%
 \normalsize
As mentioned above, $Log(M)$ is not a real matrix and hence  is not a rate matrix either. In spite of that we are still able to find a pair of Markov generators for $M$ by taking $k=0$ and $k=-1$ respectively:
\begin{center}
$Q_{0}=\frac{1}{4} \begin{pmatrix}
   -15   & 1  & 7 -2\pi & 7 +2\pi \\
   1   & -15  & 7+2\pi  & 7-2\pi \\
   7+2\pi & 7-2\pi  & -15   & 1   \\
   7-2\pi & 7+2\pi & 1   & -15   \\
  \end{pmatrix} \qquad Q_{-1}= \frac{1}{4} \begin{pmatrix}
   -15   & 1  & 7 + 2\pi & 7 -2\pi \\
   1   & -15  & 7-2\pi  & 7+2\pi \\
   7-2\pi & 7+2\pi  & -15   & 1   \\
   7+2\pi & 7-2\pi & 1   & -15   \\
  \end{pmatrix} $
\end{center}
\end{ex}
 
 \vspace{3mm}
The previous example shows that a K3 Markov matrix $M$ can be embeddable, even if it does not have any Markov generator with K3 form (see Theorem \ref{thm:LogK3Embed} (i)). Thus, embeddability of $K3$ matrices does not imply $K3$-embeddability, and is not determined by the principal logarithm (cf. \cite{verbyla,Kaie}). 
This fact exhibits that the structure of the K3 model, which imposes certain symmetries between transitions and between transversions, is not always captured by the same symmetries between the mutation rates 
(cf. \cite{Kimura1980,kimura1981}). 
%
The reader may note that the  expected number of substitutions for a process ruled by a matrix $M$  as in (\ref{matrix_lm}) is $-\frac{1}{4}tr(Q_k)=\frac{1}{4}(\lambda+2\mu)\geq \pi.
$

\begin{rk} \label{multclosure_embed} \rm 
After the previous example, we derive easily that the product of embeddable matrices within the Kimura 3ST model is not necessarily embeddable (cf. Theorem \ref{closureK3embed}).
Indeed, it is enough to consider the embeddable matrix $M$ shown in (\ref{K2neg}) and any K3 matrix $N$ with positive eigenvalues $1,x,y,z$ satisfying the inequalities (\ref{ineq}) so that $N$ is embeddable. 
The product of $M$ and $N$ is clearly a K3 Markov matrix, whose eigenvalues are the product of the eigenvalues of $M$ and $N$. Thus, $M N$ has two \emph{different negative} eigenvalues and another positive eigenvalue. By virtue of \cite{Culver}, $MN$ has no real logarithm so, in particular, it cannot be embeddable. 
\end{rk}

\paragraph{Repeated positive eigenvalues}
A similar computation to that for negative eigenvalues can be done by assuming positive and repeated eigenvalues: $1$, $e^{-\lambda}$ and $e^{-\mu}$ with multiplicity 2. In this case, we can produce real logarithms by taking  
\begin{eqnarray*}
 Q_{k} & =  & \overline{S} \; \dd\Big(0,-\lambda,-\mu +2k \pi i,-\mu -2k \pi i\Big) \; \overline{S}^{-1} = \\
 & = &  \frac{1}{4} \begin{pmatrix}
   -\lambda-2\mu  & -\lambda+2\mu  & \lambda -4\pi k& \lambda +4\pi k\\
   -\lambda+2\mu   & -\lambda-2\mu  & \lambda +4\pi k & \lambda -4\pi k \\
    \lambda +4\pi k & \lambda -4\pi k &  -\lambda-2\mu   & -\lambda+2\mu   \\
     \lambda -4\pi k  & \lambda +4\pi k & -\lambda+2\mu   & -\lambda-2\mu  \end{pmatrix}.
\end{eqnarray*}
Such a matrix is a Markov generator if and only if $4\pi|k| \leq \lambda$ and $\lambda\leq 2\mu$. In particular, we see that if 
$0\leq \lambda  \leq 2\mu$, then $M$ is $K2$-embeddable since $Q_0$ is a Markov generator. If in addition, 
\begin{eqnarray}\label{eq:pos_unid}
4\pi \leq \lambda\leq 2\mu,
\end{eqnarray}
then we have (at least) three Markov generators, which correspond to $k=0,\pm 1$.

Note that the Markov generator $Q_0$ is a K3 matrix (corresponding to the principal logarithm) while $Q_1$ and $Q_{-1}$ are not.
%
%
The  expected number of substitutions for a process ruled by these matrices is 
 $-\frac{1}{4}tr(Q_k)=\frac{1}{4}(\lambda+2\mu)\geq 2\pi$. 

\begin{rk}\rm 
Following the construction of Theorem \ref{realism}, for any $n\geq 1$ 
the matrix $e^{(1/n)Q_0}$ is a K2 Markov matrix that approaches to $Id$ when $n$ grows. %
If a substitution process is ruled by such a matrix for some long time $t>0$, the rates of the resulting Markov matrix $e^{t (Q_0/n)}$ become unidentifiable at some point. 
This situation cannot occur for K3 embeddable matrices with different eigenvalues (see Theorem \ref{thm:LogK3Embed}).
\end{rk}

\vspace{3mm}
The existence of two or more Markov generators for the same Markov matrix (both in the case of negative and positive eigenvalues) exhibit that, in general, mutation rates are not identifiable from the mutation probabilities. 
Even more, if we restrict to the Kimura 3ST submodels, such matrices do not appear as marginal cases. 
Indeed, we have seen that identifiability of rates of a K2 embeddable matrix $M$ with eigenvalues $1,e^{-\lambda},e^{-\mu}$ (with multiplicity 2) does not hold in the subspace defined by the inequalities (\ref{eq:pos_unid}), which has positive measure within the space of all K2 Markov matrices. 
Furthermore, we have also seen that those Markov matrices with a negative repeated eigenvalue $1,e^{-\lambda},-e^{-\mu}$ (with multiplicity 2) satisfying (\ref{eq:neg_unid}) are embeddable but not K2-embeddable (nor K3-embeddable). The space of such matrices has also positive measure within space of K2 Markov matrices.

Similarly, for a Jukes-Cantor matrix with eigenvalues $1$ and $e^{-\lambda}$ (with multiplicity 3), identifiability of rates does not hold in the subspace defined by $4\pi \leq \lambda$, which has positive measure within the space of all JC matrices.
On the other hand, every embeddable matrix is $JC$-embeddable since by \cite{Culver}, a necessary condition for a Markov Jukes-Cantor matrix to be embeddable is that its eigenvalues $1,x$ are positive, and then  
Theorem \ref{thm:LogK3Embed} ensures that the principal logarithm (which is a JC matrix) is a Markov generator (it is enough to check that $x \geq x^2$, which is true since $x\in  [0, 1]$).

\vspace{3mm}

To conclude this section, we state the following theorem which summarizes the consequences of the previous examples:

\begin{thm}\label{thm:identif}
\begin{enumerate}
\item For the Kimura 3ST model, $K3$-embeddability and the identifiability of rates are generic properties of embeddable matrices.\footnote{A \emph{generic property} is a property that holds almost everywhere, that is, except for a set of measure zero.}

\item For the Kimura 2ST model, $K2$-embeddability and the identifiability of rates are not generic properties of  embeddable matrices. 

\item For the Jukes-Cantor model, every embeddable matrix is JC-embeddable but identifiability of rates is not a generic property of embeddable matrices. 

%
%
%
\end{enumerate}
\end{thm}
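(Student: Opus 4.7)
The plan is to deduce the three items from Corollary \ref{cor:K3noRepeated} together with the explicit families constructed in the preceding repeated eigenvalues discussion. Throughout, ``generic'' will mean outside a set of Lebesgue measure zero in the parameter space of the corresponding model (coordinatized by the eigenvalues $x,y,z$ of (\ref{coord})), while ``not generic'' will mean that the failure locus contains a nonempty open subset.

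For item (1) I would invoke Corollary \ref{cor:K3noRepeated} directly: any K3 embeddable matrix whose non-trivial eigenvalues are pairwise distinct is automatically K3-embeddable, and by Lemma \ref{thm:K3RealLog} its principal logarithm is the unique real logarithm with K3 form, so the rates are identifiable. The complement, namely the locus of K3 Markov matrices with a repeated eigenvalue, is cut out inside the three-dimensional parameter space by the vanishing of one of the three linear polynomials $x-y$, $y-z$, $x-z$; it is therefore a finite union of hyperplanes, of measure zero, and remains of measure zero after intersecting with the (open) embeddable locus.

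For item (2) I would exhibit open subsets of counterexamples in the two-dimensional parameter space of K2 Markov matrices. The family (\ref{matrix_lm}) with $(\lambda,\mu)$ in the open region $2\pi<\lambda<2\mu$ consists of K2 Markov matrices that are embeddable via the explicit generators $Q_0$ and $Q_{-1}$, yet admit no real K3 (and hence no real K2) logarithm, since one eigenvalue is negative so Theorem \ref{thm:LogK3Embed}(i) prevents a K3-form real logarithm. This single open family witnesses simultaneously the failure of K2-embeddability and the failure of identifiability, as $Q_0\neq Q_{-1}$ are two distinct Markov generators. The positive-eigenvalue analogue with parameters in $4\pi<\lambda<2\mu$ provides a further open set of K2-embeddable matrices admitting three distinct Markov generators $Q_{0},Q_{\pm 1}$, reinforcing non-identifiability on an open subset.

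For item (3), the first half is already sketched in the text preceding the statement: a JC Markov matrix has eigenvalues $1,x,x,x$, real embeddability forces $x>0$ by \cite{Culver}, and the inequalities (\ref{ineq}) collapse to $x\geq x^2$, which is automatic for $x\in(0,1]$; hence the principal logarithm (a JC rate matrix) is always a Markov generator, proving that every embeddable JC matrix is JC-embeddable. For the non-generic identifiability, I would specialize the positive repeated-eigenvalue construction to the JC case by taking $\mu=\lambda$, so that the eigenvalues become $1,e^{-\lambda},e^{-\lambda},e^{-\lambda}$ and the $Q_{0},Q_{\pm 1}$ construction still produces real logarithms; the Markov generator inequalities become $4\pi|k|\leq\lambda$, so for $4\pi\leq\lambda$ we obtain multiple Markov generators. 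This defines an open half-line inside the one-dimensional JC parameter space, giving the required positive-measure failure of identifiability. The only delicate bookkeeping, and it is mild, is checking that the prescribed open regions lie in the respective Markov loci (nonnegativity of all entries), which is routine from the explicit form of (\ref{matrix_lm}) and its positive analogue; beyond that the theorem is essentially a consolidation of the earlier examples.
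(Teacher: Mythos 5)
Your proposal is correct and follows essentially the same route as the paper, which states the theorem explicitly as a summary of the preceding discussion: Corollary \ref{cor:K3noRepeated} together with the measure-zero repeated-eigenvalue locus for item (1), the explicit negative- and positive-repeated-eigenvalue families (\ref{matrix_lm}) and its positive analogue with their multiple generators $Q_k$ for item (2), and the Culver-plus-principal-logarithm argument ($x \geq x^2$ on $(0,1]$) together with the $\mu=\lambda$ specialization for item (3). One trivial imprecision: the repeated-eigenvalue locus in item (1) should also include the hyperplanes $x=1$, $y=1$, $z=1$ (coincidence with the trivial eigenvalue), not just $x=y$, $y=z$, $x=z$, but this does not affect the measure-zero conclusion.
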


%

%
%
%

\section{The volume of embeddable K3 matrices}
Roughly speaking, the goal of this section is to measure how many K3 Markov matrices we are considering when the continuous-time approach is taken and compare this value with the corresponding value of K3 Markov matrices with no further restriction. These values are expressed in terms of the volume of the corresponding subspaces.  At the same time, it is direct to obtain the volume of subspaces of K3 Markov matrices with some constraints to make them biologically realistic.  To this aim, we proceed to represent K3 Markov matrices in a geometrical way as follows (cf. \cite{casfer_k3}): keeping the notation introduced in (\ref{coord}), 
Lemma \ref{Lema:K3Diag} allows us to identify the K3 Markov matrices with the coordinates $(x,y,z)$ of a 3-dimensional space. Moreover, since every K3 matrix is a convex combination of the identity matrix and permutation matrices:
\begin{eqnarray*}
\footnotesize
M=
a \; \begin{pmatrix}
  1 & 0 & 0 & 0\\
  0 & 1 & 0 & 0 \\
  0 & 0 & 1 & 0 \\
  0 & 0 & 0 & 1 \\
  \end{pmatrix}
 + b \; \begin{pmatrix}
  0 & 1 & 0 & 0\\
  1 & 0 & 0 & 0 \\
  0 & 0 & 0 & 1 \\
  0 & 0 & 1 & 0 \\
  \end{pmatrix}
+c \; \begin{pmatrix}
  0 & 0 & 1 & 0\\
  0 & 0 & 0 & 1 \\
  1 & 0 & 0 & 0 \\
  0 & 1 & 0 & 0 \\
  \end{pmatrix}
+ d\; \begin{pmatrix}
  0 & 0 & 0 & 1\\
  0 & 0 & 1 & 0 \\
  0 & 1 & 0 & 0 \\
  1 & 0 & 0 & 0 \\
  \end{pmatrix}, \quad 
  \begin{array}{c} a,b,c,d\geq 0,  \\ a+b+c+d=1
  \end{array}
\end{eqnarray*}
\normalsize
the space of all K3 Markov matrices describes the 3-dimensional simplex (a regular tetrahedron) with vertices given by the corresponding eigenvalues: $p_1=(1,1,1)$, $p_2=(1,-1,-1)$, $p_3=(-1,1,-1)$ and $p_4=(-1,-1,1)$ (see Figure \ref{simp}). The centroid of this simplex has coordinates (eigenvalues) $O=(0,0,0)$ and corresponds to the matrix 
$$M=\frac{1}{4} \begin{pmatrix}
  1 & 1 & 1 & 1\\
  1 & 1 & 1 & 1 \\
  1 & 1 & 1 & 1 \\
  1 & 1 & 1 & 1 \\
  \end{pmatrix}.$$
  According to this representation, the Jukes-Cantor matrices \cite{JC69} ($b=c=d$) correspond to the line determined by the identity vertex and the centroid of the simplex ($x=y=z)$, while Kimura 2ST matrices \cite{Kimura1980} ($c=d$) correspond to a plane section of the simplex ($y=z$). 

  \begin{figure}[h]
\begin{center}
\includegraphics{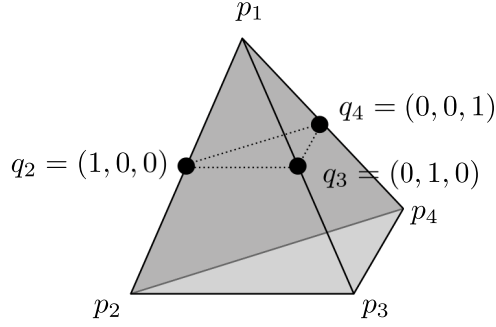}
\caption{\label{simp} Simplex representing all K3 Markov matrices. Each matrix is represented by its eigenvalues.}
\end{center}
\end{figure}

We proceed to compute the volume of a number of subspaces of K3 Markov matrices that can be interesting from a biological perspective. 
First of all, we introduce some notation:
\begin{eqnarray*}
& \Delta  & \mbox{is the space of all K3 Markov matrices}. \\
& \Delta_{*} & \mbox{is the space of matrices for which $a\geq b,c,d$}. \\
& \Delta_{+} & \mbox{is the subspace of matrices with only positive eigenvalues}. \\
& \Delta_d & \mbox{is the subspace of matrices that are diagonally dominant, i.e. }a\geq b+c+d. \\
& \Delta_{\varepsilon} & \mbox{is the subspace of embeddable K3 matrices}. 
\end{eqnarray*}
By Lemma \ref{Lema:K3Diag}, it is straighforward to check that $\Delta_d \subset \Delta_+ \subset \Delta_* \subset \Delta$. Note that the examples of the preceding section show that $\Delta_{\varepsilon} \not\subset \Delta_*$ (see the matrix in (\ref{K2neg})). However, according to Corollary \ref{cor:K3noRepeated}, all embeddable matrices that are not in $\Delta_+$ correspond to the case of repeated eigenvalues, so they are a marginal case with measure (i.e. volume) 0 within the whole space of K3 matrices. This is because these matrices are constrained by nontrivial algebraic constraints, which make the dimension of the corresponding subspace necessarily smaller. 
Nevertheless, as shown in (e) and (f) of the next result,  there are lots of embeddable matrices with no repeated eigenvalues that are not diagonal dominant. 
  
\begin{thm}\label{thm:vols}
We have the following: \\
(a) $V(\Delta_{d})=1/3$;  
(b) $V(\Delta_{+})=1/2$; 
(c) $V(\Delta_{*})=2/3$; 
(d) $V(\Delta)=8/3$; 
(e) $V(\Delta_{\EE})=1/4$; \\
(f) $V(\Delta_{\varepsilon} \cap \Delta_d)\simeq 0.20336$ 
\end{thm}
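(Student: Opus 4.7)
The plan is to translate everything into the eigenvalue coordinates $(x,y,z)$ of equation (\ref{coord}) and integrate. Parametrising a K3 Markov matrix by $(a,b,c)$ with $d=1-a-b-c$, a Jacobian computation gives $dx\,dy\,dz=16\,da\,db\,dc$; the standard $(a,b,c)$-simplex has volume $1/6$, so part (d) follows immediately as $V(\Delta)=8/3$.

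For (c), the $S_4$-action permuting $a,b,c,d$ acts by isometries on the regular tetrahedron $\Delta$, and the four regions in which each of these coordinates is largest are congruent and partition $\Delta$ up to a measure-zero set; hence $V(\Delta_*)=V(\Delta)/4=2/3$. Part (a) reduces to $a\geq 1/2$: the corresponding region in $(a,b,c)$-coordinates is a scaled simplex of volume $1/48$, so $V(\Delta_d)=16/48=1/3$. For (b), in eigenvalue coordinates the region $\Delta_+$ is cut out by $x,y,z\geq 0$ together with the three face inequalities $x+y\leq z+1$, $x+z\leq y+1$, $y+z\leq x+1$ (coming from $b,c,d\geq 0$). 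Fixing $z\in[0,1]$, the cross-section is the triangle $\{x,y\geq 0,\,x+y\leq z+1\}$ of area $(z+1)^2/2$ minus two corner triangles each of area $z^2$, giving $(1+2z-3z^2)/2$; integrating over $z$ yields $V(\Delta_+)=1/2$.

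For (e), Corollary \ref{cor:K3noRepeated} reduces the problem, up to the measure-zero locus of repeated eigenvalues, to computing the volume of $\{(x,y,z):x,y,z\geq 0,\; x\geq yz,\; y\geq xz,\; z\geq xy\}$. A key preliminary step is to check that this region sits automatically inside $\Delta$: multiplying any two of the three inequalities forces $x,y,z\leq 1$, and the simplex face inequality $x+y\leq z+1$ is then equivalent to $(1-x)(1-y)\geq 0$, which is immediate (the other two face inequalities follow by symmetry). Fixing $z\in[0,1]$, the admissible range for $x$ is $[yz,\min(y/z,z/y)]$; exploiting the $y\leftrightarrow z$ symmetry, $V(\Delta_\EE)=2\int_0^1\!\int_0^z(y/z-yz)\,dy\,dz=1/4$.

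Finally, (f) is the same volume restricted to $\{x+y+z\geq 1\}$ (the eigenvalue translation of $a\geq 1/2$). Since no clean closed form is apparent, I would evaluate it by numerical integration, obtaining the stated approximation $\simeq 0.20336$. The main difficulty I anticipate is the bookkeeping in (e): once set up the integral is short, but one must verify carefully that the three multiplicative inequalities alone carve out exactly the embeddable locus inside $\Delta$, for otherwise the integration region would be strictly smaller than $\Delta_\EE$.
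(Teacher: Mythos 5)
Your proposal is correct, and for parts (a)--(d) it takes a genuinely different route from the paper. The paper works entirely inside the eigenvalue tetrahedron of Figure \ref{simp} and computes each volume by decomposing the region into explicit simplices and summing volumes via the determinant formula $\frac{1}{6}\,|\det(\cdot)|$: $\Delta_d$ is a single simplex with vertices $p_1,q_2,q_3,q_4$; $\Delta_+$ is $\Delta_d$ plus one simplex of volume $1/6$; and $\Delta_*$ is $\Delta_+$ plus three simplices of volume $1/18$ each. You replace this by the Jacobian computation $dx\,dy\,dz=16\,da\,db\,dc$ (which unifies (a) and (d)), an $S_4$-symmetry argument for (c), and cross-sectional slicing for (b); the symmetry argument in particular is cleaner and less error-prone than locating the three extra simplices $\{O,q_2,q_3,p_{123}\}$, etc., while the paper's method has the merit of exhibiting the chain $\Delta_d\subset\Delta_+\subset\Delta_*$ concretely. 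For (e) your computation is essentially the paper's: both slice the region $\{x,y,z\geq 0,\ x\geq yz,\ y\geq xz,\ z\geq xy\}$ by planes $z=\mathrm{const}$ and integrate (the paper integrates out $y$ first, you integrate out $x$ first; both integrals evaluate to $1/4$). Your preliminary containment check --- multiplying two of the three inequalities forces $x,y,z\leq 1$, and then each face inequality such as $x+y\leq z+1$ follows from $z\geq xy$ together with $(1-x)(1-y)\geq 0$ --- is a genuine improvement in rigor: the paper silently identifies $V(\Delta_\varepsilon)$ with the volume of the inequality region without verifying that this region lies inside $\Delta$, and without that verification the integral could in principle overcount. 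For (f) both you and the paper resort to machine computation: the paper sets up exact integrals, split at $a=3-2\sqrt{2}$ according to whether the line $x+y=1-a$ meets the hyperbola $xy=a$, and evaluates them in SAGE, whereas you propose direct numerical integration of $\EE\cap\{x+y+z\geq 1\}$; since the statement only asserts an approximate value, either is adequate.
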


\begin{proof}
\noindent (a) The space $\Delta_d$ of diagonal dominant matrices is defined by the inequality $a\geq b+c+d$. Since $a+b+c+d=1$, this is equivalent to $a\geq 1/2$. Thus, $\Delta_d$ is the regular simplex with vertices $p_1$, $q_2$, $q_3$ and $q_4$ (see Figure \ref{simp}), and its volume is given by the well known formula
\begin{eqnarray*}
V(\Delta_d)=\frac{1}{6} 
\, | det(\overrightarrow{p_1q_2}, \overrightarrow{p_1q_3}, \overrightarrow{p_1q_4}) | = 1/3.
\end{eqnarray*}

\vspace{2mm}
\noindent (b) The space $\Delta_+$ of matrices with positive eigenvalues is composed of  $\Delta_d$ together with the simplex with vertices $q_1$, $q_2$, $q_3$ and the centroid $O$. The formula above gives that this last simplex has volume $1/6$. Therefore, $V(\Delta_+)=V(\Delta_d)+1/6=1/2$. 

%
\vspace{2mm}
\noindent (c) The three inequalities defining $\Delta_*$ are equivalent to $x+y\geq 0$, $x+z\geq 0$ and $y+z\geq 0$. If we denote by $p_{ijk}$ the centroid of the triangle defined by $p_i$, $p_j$ and $p_k$, it is straighforward to see that $\Delta_*$ is composed of $\Delta_+$ together with the three simplices defined by $\{O,q_2,q_3,p_{123}\}$, 
$\{O,q_3,q_4,p_{134}\}$ and $\{O,q_4,q_2,p_{124}\}$. These three simplices have the same volume: 
\begin{eqnarray*}
\frac{1}{6} 
\, | det(
\overrightarrow{Oq_i}, 
\overrightarrow{Oq_j},
 \overrightarrow{Op_{1ij}}
) |
 = 1/18.
\end{eqnarray*}
It follows that $V(\Delta_*)=1/2+3\, (1/18)=2/3$. 

\vspace{2mm}
\noindent (d) follows similarly to the computation of (a).

\vspace{2mm}
\noindent (e) The computation of $\Delta_{\varepsilon}$ is more involved. First of all, as noted above, we can restrict the computation to matrices with no repeated eigenvalues. 
Therefore, it is enough to compute the volume of the space $\EE$ defined by the inequalities (\ref{ineq}). By cutting the space $\EE$ with planes of the form $z=a$ with $a\in [0,1]$, we obtain a shape like Figure \ref{int}. The computation follows by integrating the corresponding area for all values of $a$.
Given $a\in [0,1]$ and $x\in [0,a]$, the range of values for $y$ is between $a\,x$ and $x/a$, while for $x\in [a,1]$, the value of $y$ lies between $a\,x$ and $a/x$. 
We are led to compute the following integral
\begin{eqnarray*}
 V(\Delta_{\EE})=\int_0^1 \left ( \int_0^z \int_{zx}^{x/z} dy\; dx + \int_z^1 \int_{xz}^{z/x} dy \; dx \right ) dz
\end{eqnarray*}
which can be easily shown to be equal to $1/4$.

\vspace{2mm}
\noindent (f) The computation of the volume of the intersection $\Delta_d \cap \Delta_{\varepsilon}$ is similar to the computation of $V(\Delta_{\varepsilon})$ but for each plane section $z=a$, $a\in [0,1]$ we have to remove the area of the space below the line $x+y=1-a$ (corresponding to non-embeddable matrices). This leads to two different situations: for $a\in [0,3-2\sqrt{2}]$ the line cuts the hyperbola $xy=a$; while for $a\in [3-2\sqrt{2},1]$ line and hyperbola does not meet. The computation of the corresponding integrals is tedious and we do not include it here. The final value has been obtained using the mathematical software SAGE \cite{sagemath}.
\end{proof}

\begin{figure}
\begin{center}
\includegraphics{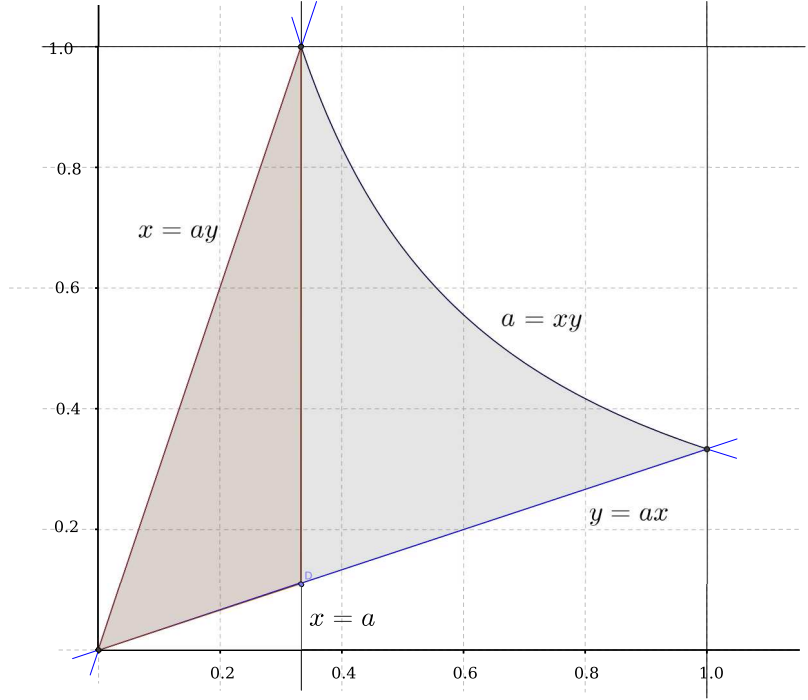}
\caption{\label{int} Plane section of the space $\varepsilon$ of embeddable K3 matrices with the plane $z=a, a\in [0,1]$.}
\end{center}
\end{figure}

The values of these volumes illustrate the relative size between the spaces of Markov matrices considered above. 
Table \ref{tab:volume} shows these volumes and the relative volume of embeddable matrices in each of the above subspaces of K3 Markov matrices.  These relative volumes are a measure of how many matrices are rejected when taking the continuous-time approach instead of considering other subspaces of matrices. 
%
%
The figures in the second row of the table show, in particular, that there is a big difference between these volumes.
For example, embeddable matrices suppose half of the matrices with positive eigenvalues. Similarly,  only three out of eight K3 Markov matrices satisfying $a\geq b,c,d$ are embeddable, while we observe that  they represent more than the 60\% of the diagonal dominant matrices. However, if we only consider diagonal dominant matrices we are rejecting a non-negligible number of embeddable matrices (the difference of volumes between embeddable and diagonal dominant embeddable is $1/4-0.20336=0.046641$, see Theorem \ref{thm:vols}).

\begin{table}[h!]
  \centering
    \begin{tabular}{c||c|cccc}
  & $\Delta_{\varepsilon}$ & $\Delta$ & $\Delta_*$ & $\Delta_+$ & $\Delta_d$  \\ \hline
$V(\cdot)$ & 1/4 & 8/3 & 2/3 & 1/2 & 1/3  \\
relative vol. of embeddable   & 1 & 3/32 & 3/8 & 1/2 &  0.61008
 \\
  \end{tabular}
    \caption{\label{tab:volume} Volumes and relative volumes of the embeddable K3 matrices. The relative volumes of embeddable matrices within each spaces are shown in the second row of the table and are obtained as the quotients $V(\Delta_{\varepsilon} \cap \cdot)/V(\Delta_{\cdot})$. }
\end{table}

\begin{rk} \rm
From Theorem \ref{thm:identif} (i) it follows that 
all the values of Table \ref{tab:volume} remain the same if we only consider $K3$-embeddable matrices (instead of embeddable matrices). 
%
\end{rk}

Since we have no characterization for embeddable matrices with repeated eigenvalues, and this set has positive measure within the Kimura 2ST model (Theorem \ref{thm:identif}), we are not able to compute volumes within this model. 
%
%

Nevertheless, for the Jukes-Cantor model, Theorem \ref{thm:identif} states that embeddability is equivalent to JC-embeddability, which has been characterized in Theorem \ref{thm:LogK3Embed}. 
Adapting the notation used for the K3 matrices to the JC model, we get that 
\begin{eqnarray*}
 \Delta^{JC} & = &\{(x,x,x)\in \Delta \mid x\in [-1/3,1]\}; \\
 \Delta^{JC}_{*} & = & \{(x,x,x)\in \Delta \mid x\in [0,1]\}; \\
 \Delta^{JC}_{+}=\Delta^{JC}_{\varepsilon}& = &\{(x,x,x)\in \Delta \mid x\in (0,1]\};\\
 \Delta^{JC}_{d}& = &\{(x,x,x)\in \Delta \mid x\in [1/3,1]\}.
\end{eqnarray*}
A straightforward computation shows that the space of embeddable Jukes-Cantor matrices has volume (length) $\sqrt{3}$ while the space of all Markov Jukes-Cantor matrices has volume $\frac{4}{3}\sqrt{3}$. 
That is, three out of four Jukes-Cantor matrices are embeddable.

\section{Discussion}

As suggested in the introduction, there are four connected problems relative to the algebraic and continuous-time evolutionary models. Namely, 
\begin{enumerate}
 \item whether a given Markov matrix is $e^Q$ for some rate matrix $Q$ (embedding problem);
 \item whether a given Markov matrix $M$ within an algebraic model $\mathcal{M}$ is $e^Q$ for some rate matrix $Q$ with the same symmetries as $M$ ($\mathcal{M}$-embedding problem);
 \item whether the product of embeddable matrices is embeddable, i.e. if $Q_1$ and $Q_2$ are rate matrices, then is it true that $e^{Q_1} e^{Q_2} =e^{Q_3}$ for some rate matrix $Q_3$? (multiplicative closure of embeddable matrices);
 \item same question as in 3. but with $Q_1$,$Q_2$ and $Q_3$ within a particular continuous-time model (multiplicative closure of $\mathcal{M}$-embeddable matrices).
\end{enumerate}
The first two questions are motivated by the connection between the algebraic and the continuous-time models. The last two are more intrinsic of the continuous-time approach.
In this paper, we have discussed the embedding problem  for Markov matrices within the Kimura 3-parameter model (\textbf{1.}). 
Under a generic assumption (that is, eigenvalues should be different), we have obtained a characterization of the embeddability of the matrices for this model in terms of some inequalities relative to their eigenvalues (Corollary \ref{cor:K3noRepeated}). Moreover, Theorem \ref{thm:identif} shows that in this case, rates are identifiable and keep the K3 symmetries, so that 
embeddability holds if and only if so does $K3$-embeddability, which has been characterized in Theorem \ref{thm:LogK3Embed} (\textbf{2.}).  
%
%
As for the problem (\textbf{3.}), Remark \ref{multclosure_embed} shows that the  product of embeddable matrices within the Kimura 3ST is not embeddable in general. However, under the assumption of different eigenvalues again, problems (\textbf{3.}) and (\textbf{4.}) become equivalent and Theorem \ref{closureK3embed} 
gives an affirmative answer to both questions. 

As a consequence of the characterization of Corollary \ref{cor:K3noRepeated}, we have been able to compute and compare the volume of embeddable and K3-embeddable matrices within some subspaces of K3 Markov matrices that may be regarded as a compromise solution between the continuous-time approach and the whole space of K3 Markov matrices (see Theorem \ref{thm:vols} and Table \ref{tab:volume}).

Despite the fact that the evolutionary submodels of Kimura 3ST (Kimura 2ST and Jukes-Cantor) have repeated eigenvalues, the results obtained here also give information about them.
While a characterization of the the embeddability of the Jukes-Cantor model follows easily from the general case of K3 matrices, the embeddability of Kimura 2ST model is more involved and remains an open question. 
Under this model, we have provided examples of matrices with repeated eigenvalues showing that there are embeddable matrices within the K2ST model that have no Markov generator with K3 form (see Theorem \ref{thm:identif}) and although these matrices are close to saturation, Theorem \ref{realism} shows they are still biologically relevant. 

 Another open last issue is the identifiability of rates for K3 embeddable matrices with repeated eigenvalues. In this case, the inequalities (\ref{eq:neg_unid}) for negative eigenvalues and (\ref{eq:pos_unid}) for positive eigenvalues are sufficient for a K3 Markov matrix to have different Markov generators. 
We believe that the these inequalities are actually necessary, but a much more technical analysis is required.  
Moreover, we conjecture that under the Kimura 3ST model, the rates of an embeddable matrix that is not $K3$-embeddable are not identifiable. 
An affirmative answer to these questions would allows us to 
characterize the embeddability of any K3 Markov matrix (cf. Theorem \ref{thm:K3RealLog}), and show that for Markov matrices with positive eigenvalues, embeddability is equivalent to $\mathcal{M}$-embeddability for K3ST and any of its submodels.
We defer such a study for a future publication.

\section{Acknowledgement}
The authors want to thank Marta Casanellas and Jeremy Sumner for conversations held on this topic. They alse wish to thank the reviewers for useful comments and suggestions. 

JRL and JFS are partially supported by Spanish government MTM2015-69135-P.  Second author is also supported by Generalitat de Catalunya 2014SGR634.

\setlength{\parindent}{0pt}
\bibliographystyle{alpha}
\newcommand{\etalchar}[1]{$^{#1}$}

\end{document}